\title{Optimal Rank and Select Queries on Dictionary-Compressed Text}
\titlerunning{Rank and Select on Dictionary-Compressed Text}
\author{Nicola Prezza}{Department of Computer Science, University of Pisa, Italy}{nicola.prezza@di.unipi.it}{https://orcid.org/0000-0003-3553-4953
}{}
\authorrunning{N. Prezza}
\subjclass{\ccsdesc[500]{Theory of computation~Data compression, Cell probe models and lower bounds}}
\keywords{Rank, Select, Dictionary compression, String Attractors}
\begin{document}

\maketitle


\begin{abstract}
	We study the problem of supporting queries on a string $S$ of length $n$ within a space bounded by the size $\gamma$ of a string attractor for $S$.
	Recent works showed that random access on $S$ can be supported in optimal $O(\log(n/\gamma)/\log\log n)$ time within $O\left (\gamma\ \rm{polylog}\ n \right)$ space.
	In this paper, we extend this result to 
	\emph{rank} and \emph{select} queries and provide lower bounds matching our upper bounds on alphabets of polylogarithmic size. 
	Our solutions are given in the form of a space-time trade-off that is more general than the one previously known for grammars and that improves existing bounds on LZ77-compressed text by a $\log\log n$ time-factor in \emph{select} queries.
	We also provide matching lower and upper bounds for \emph{partial sum} and \emph{predecessor} queries within attractor-bounded space, and extend our lower bounds to encompass navigation of dictionary-compressed tree representations. 
\end{abstract}

\section{Related Work}

Access, rank, and select queries stand at the core of many tasks on compact and compressed data structures, including compressed indexes, graphs, trees, sets of points, etc. Given a string $S[1..n]$  over an integer alphabet $\Sigma = [0,\sigma-1]$, these queries are defined as:

\begin{itemize}
	\item $S.access(i)$, $i=1,\dots, n$: return the $i$-th symbol of $S$.
	\item $S.rank_c(i)$, $i=1,\dots, n$: return the number of occurrences of symbol $c$ in $S[1..i]$.
	\item $S.select_c(i)$, $i=1,\dots, S.rank_c(n)$: return the position of the $i$-th occurrence of $c$ in $S$.
\end{itemize}

While the problem is essentially solved within entropy-compressed bounds~\cite{BN14}, the increasingly growing production of repetitive datasets in fields such as biology and physics (as well as in many web software repositories) is raising the problem of extending such functionalities (as well as more complex queries such as indexing) to dictionary-compressed representations. 
This problem has lately received some attention in the literature, and solutions are known for some (but not all) compression schemes. 
Belazzougui et al. in~\cite{belazzougui2015queries} provided near-optimal bounds on LZ77-compressed text: letting $z$ be the number of LZ77 phrases, they show how to perform access in $O\left (z \log^\epsilon n\log(n/z)/\log\log n\right)$ space and $O(\log(n/z)/\log\log n)$ time. \emph{rank} and \emph{select} require a $\sigma$-factor more space and are supported in $O(\log(n/z)/\log\log n)$ and $O(\log(n/z))$ time, respectively. 
Ord{\'o}{\~n}ez et al.~\cite{ordonez2017grammar} and Belazzougui at al.~\cite{belazzougui2015access} solved the problem on grammar-compressed strings in $O(\sigma g)$ space and $O(\log n)$ time for rank and select queries. The latter paper also provides a trade-off using $O(\tau\sigma g \log_\tau(n/g))$ words and supporting queries in $O(\log_\tau(n/g))$ time, for $2\leq \tau \leq \log^\epsilon n$ and any constant $\epsilon>0$. For $\tau=\log^\epsilon n$, this solution yields $O(\sigma g \log^\epsilon n\log(n/g)/\log\log n)$ space and $O(\log(n/g)/\log\log n)$ query time. No solutions are known for other dictionary compression schemes such as Macro Schemes~\cite{storer1982data}, Collage Systems~\cite{KidaMSTSA03}, or the run-length Burrows-Wheeler transform~\cite{burrows1994block}.

In this paper, we use the newborn theory of string attractors~\cite{kempa2018string,kempa2018roots} to provide a universal solution working simultaneously on all known dictionary compressors. We moreover provide the first lower bounds for these queries, which match our upper-bounds on polylogarithmic alphabets. Our solutions are the first for the run-length Burrows-Wheeler transform, Macro Schemes, Run-length SLPs, and Collage Systems. We obtain the full-spectrum trade-off for grammars, generalizing the result in~\cite{belazzougui2015access} for all $2\leq \tau \leq n$, and improve the existing bounds on LZ77-compressed text~\cite{belazzougui2015queries} by a $\log\log n$ time-factor in \emph{select} queries. 

Importantly, we note that the reason why Belazzougui et al. cannot reach the optimal $O(\log(n/z)/\log\log n)$ time for \emph{select} queries on LZ77 in~\cite{belazzougui2015queries} and cannot get the full spectrum $2\leq \tau \leq n$ on grammars in~\cite{belazzougui2015access} is the same: they need to perform a predecessor query at each level of their structure. In the first case, they use z-fast tries~\cite{belazzougui2009monotone}, which introduce a $O(\log\log n)$ multiplicative factor in query times. In the second case, they use fusion trees~\cite{fredman1990blasting}, which support predecessor queries in constant time only for $\tau = O(\log^\epsilon n)$. We solve this problem, and obtain optimal query times also for \emph{select} queries, by reducing both \emph{rank} and \emph{select} to partial sum queries:

\begin{itemize}
	\item $S.psum(i) = \sum_{j=1}^i S[j],\ i=1,\dots, n$.
\end{itemize}

We show how to support partial sums in optimal time within attractor-bounded space (optimality follows from a straightforward reduction from \emph{access} queries) by generalizing the strategy used by Belazzougui et al.~\cite{belazzougui2015queries} to solve \emph{rank} queries. Importantly, our reductions to \emph{partial sum} preserve the attractor size. This solution allows us to avoid performing expensive predecessor queries at each level of our structure, thus obtaining constant time per level on the whole range $2\leq \tau \leq n$. 

On our way, we extend our lower bounds to operations on dictionary-compressed sequences of balanced parentheses, typically used to support navigation on (compressed) trees. Our lower bounds show that existing solutions~\cite[Lem. 8.2, 8.3]{bille2015random} on trees represented using grammar-compressed sequences of balanced parentheses are far from the optimum by a $O(\log\log n)$-factor.

We also note that our lower- and upper- bounds easily extend to the well-studied \emph{predecessor} problem, which asks to find the largest element $x$ not larger than a given $y$ in an opportunely-encoded set $\{x_1, \dots, x_m\} \subseteq [1,n]$. 
A classic result from Beame and Fich~\cite[Cor 3.10]{beame2002optimal} states that, using words of size $\log^{O(1)} n$, no static data structure of size $m^{O(1)}$ can answer predecessor queries in time $o(\sqrt{\log m/\log\log m})$. First, note that a dictionary compressed representation of the sequence $x_1, x_2-x_1, \dots, x_m-x_{m-1}$ always takes at most $O(m)$ words of space, and could take much less if the sequence is repetitive. Indeed, we show that Beame and Fich's lower bound can be improved to $\Omega(\log n/\log\log n)$ when the sequence is dictionary-compressed, and provide a data structure matching this lower bound.

\subsection{String Attractors}

Let $S$ be a string of length $n$. Informally, a string attractor~\cite{kempa2018roots} for $S$ is a set $\Gamma \subseteq [1..n]$ with the following property: any substring of $S$ has at least one occurrence in $S$ crossing at least one position in $\Gamma$. The following definition formalizes this concept. 

\begin{definition}[String attractor~\cite{kempa2018roots}]\label{def: string attractor}
	A \emph{string attractor} of a string $S\in\Sigma^n$ is a set of
	positions $\Gamma \subseteq [1..n]$ such that every substring
	$S[i..j]$ has at least one occurrence
	$S[i'..j'] = S[i..j]$ with $j'' \in [i'..j']$ for some
	$j''\in\Gamma$.
\end{definition}

String attractors were originally introduced as a unifying framework for known dictionary compressors: Straight-Line
programs~\cite{KY00} (context-free grammars generating the string), Collage Systems~\cite{KidaMSTSA03},
Macro schemes~\cite{storer1982data} (a set of substring equations
having the string as unique solution; this includes Lempel-Ziv 77~\cite{lempel1976complexity}), the run-length Burrows-Wheeler
transform~\cite{burrows1994block} (a string permutation whose number
of equal-letter runs decreases as the string's repetitiveness
increases), and the compact directed acyclic word
graph~\cite{blumer1987complete,crochemore1997direct} (the minimization
of the suffix tree). As shown in~\cite{kempa2018roots}, any of the above compressed representations induces a string attractor of the same asymptotic size, which for most compressors is also a polylogarithmic approximation to the smallest attractor (NP-hard to find~\cite{kempa2018string,kempa2018roots}).
The other way round also holds for a subset of the above compressors: given a string attractor of size $\gamma$, one can build a compressed representation of size $O(\gamma\log(n/\gamma))$. 
These reductions imply that we can design universal compressed data structures (i.e. working on top of any of the above compressors). In particular, it can be shown that optimal-time random access can be supported within $O(\gamma\ \mathrm{polylog}\ n)$ space~\cite{kempa2018roots}. Similarly, fast text indexing can be achieved within the same space~\cite{navarro2018faster,navarro2018universal}.





\section{Lower Bounds on Dictionary-Compressed Strings}\label{sec:lower string}

We start by providing lower bounds for \emph{rank, select} and \emph{partial sum} queries on dictionary-compressed text. 
We also consider \emph{predecessor} queries on sets represented by dictionary-compressed binary strings. 
Our lower bounds are shown using grammar compression, and therefore automatically extend to any compression scheme more powerful than SLPs. In the following, we only consider grammars whose right-hand side has size two, and define the grammar's size to be the number of nonterminals. 

Our starting point is the following theorem from Verbin and Yu~\cite{CVY13}, in the variant revisited by Kempa and Prezza~\cite[Thm 5.1]{kempa2018roots}:

\begin{theorem}[Verbin and Yu~\cite{CVY13}]\label{th:low_bound_access}
	Let $g$ be the size of any Straight-Line Program for a string $S$ of length $n$ over a binary alphabet. 
	Any static data structure taking $O(g\ {\rm polylog}\ n)$ space cannot answer random access queries on $S$ in less than $O(\log n / \log \log n)$ time.
\end{theorem}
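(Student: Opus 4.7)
The plan is to prove this via a reduction from a known cell-probe lower bound that already achieves $\Omega(\log n/\log\log n)$ query time against polynomial-space, polylog-word data structures. Natural candidates are the colored-predecessor lower bound of Patrascu and Thorup, or a direct encoding-based argument in the Patrascu--Viola style; both yield this bound on $n^{O(1)}$-space structures with $\mathrm{polylog}\ n$-bit words. Given such a hard problem on instances of parameter $m$, the aim is to embed every instance $I$ into a binary string $S_I$ so that $|S_I| = n$ is polynomial in $m$, $S_I$ admits an SLP of size $g = O(m\,\mathrm{polylog}\ m)$, and a single hard query on $I$ is answerable from $O(1)$ access queries to $S_I$.

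First I would fix a scaffolding grammar of polylog size whose expansion is a canonical binary string of length $n$, and then plant the hard instance into the leaves so that the total grammar blows up by at most an $O(m)$ additive term. A convenient choice is a balanced doubling grammar in which each nonterminal doubles the length of its expansion; within this skeleton, only a small number of payload nonterminals are overwritten to carry the bits of $I$. The bijection between positions $i \in [1,n]$ and bits of $I$ is engineered so that a single hard query reduces to $O(1)$ calls of the form $S_I.access(i)$; for colored-predecessor, for instance, one uses a run-length style encoding in which an access inside a run reveals the color and a small number of accesses near run boundaries pinpoints the predecessor.

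The main obstacle is the \emph{simultaneous} embedding: we need that every instance $I$ in the hard family admits a grammar of size $O(m\,\mathrm{polylog}\ m)$, not just a single worst-case instance. This forces the scaffolding to be generic and instance-independent, with only the payload nonterminals depending on $I$. Once the encoding is in place, the reduction is immediate: a hypothetical data structure using $O(g\,\mathrm{polylog}\ n) = m\,\mathrm{polylog}\ m$ space and answering access in $o(\log n/\log\log n) = o(\log m/\log\log m)$ time would translate directly into a structure violating the source cell-probe lower bound on the hard problem. The Kempa--Prezza revisitation of the Verbin--Yu argument is convenient here because it makes the space and time bookkeeping explicit and robust under the generous $\mathrm{polylog}\ n$ space slack granted in the statement, so essentially no slack is lost in the reduction.
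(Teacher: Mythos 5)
This is a cited theorem: the paper imports it verbatim from Verbin and Yu~\cite{CVY13} (via the Kempa--Prezza restatement, \cite[Thm 5.1]{kempa2018roots}) and offers no proof of its own, so there is no in-paper argument for your sketch to be measured against. Treating your proposal as a blind reconstruction of the Verbin--Yu argument: the high-level plan---reduce from a cell-probe-hard static problem via a grammar-compressible encoding---is the right genre of argument, and your bookkeeping of how $O(g\,\mathrm{polylog}\ n)$ space translates through the reduction is sensible.

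The concrete gap is in your choice of source problem. You propose colored predecessor on $m$ elements, with $n=|S_I|$ \emph{polynomial in $m$}, i.e.\ a universe of size $\mathrm{poly}(m)$. But in that regime predecessor is easy: an x-fast trie (or van Emde Boas) answers in $O(\log\log n)=O(\log\log m)$ time using $O(m\log n)$ space, so the Patrascu--Thorup lower bound you are invoking collapses to $\Theta(\log\log m)$, nowhere near the $\Omega(\log m/\log\log m)$ you need the source problem to provide. Blowing up $n$ to make the universe large does not rescue the reduction either, because then the target bound $\Omega(\log n/\log\log n)$ grows faster than the source bound and the implication no longer goes the right way. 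So the reduction as sketched would not prove the theorem. Beyond that, the ``scaffolding grammar plus $O(m)$ payload nonterminals'' construction is left at the level of an intention: you correctly flag the simultaneous embedding as the central obstacle, but restating the obstacle is not the same as resolving it, and nothing in the sketch shows that an \emph{arbitrary} hard instance $I$ yields a string whose grammar stays near $O(m)$ while a single hard query collapses to $O(1)$ accesses at computable positions. Verbin and Yu's actual argument is built around a purpose-made hard family (not off-the-shelf predecessor) precisely to sidestep these issues; you would need to either adopt that family or pick a source problem, such as static two-dimensional range counting in Patrascu's lopsided-set-disjointness framework, that genuinely carries an $\Omega(\log n/\log\log n)$ bound at near-linear space and polylogarithmic word size, and then design the encoding around it.
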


The idea is to show a reduction from \emph{access} queries to \emph{partial sum}, \emph{predecessor}, and \emph{rank} and from \emph{rank} to \emph{select}, while asymptotically preserving the grammar size. For \emph{rank} and \emph{partial sum} the reduction is straightforward. Let $S$ be a binary string. Then, $S.access(i) = S.rank_1(i)-S.rank_1(i-1) = S.psum(i)-S.psum(i-1)$, where $S.rank(0)=S.psum(0)=0$ for convenience. It follows that also $rank_1$ and \emph{partial sum} queries cannot break the lower bound of Theorem \ref{th:low_bound_access} (note that, since this is a lower bound, we can remove the restriction on the alphabet size).

To extend the lower bound to \emph{select} queries, we build a string $\delta(S)$ such that (i) $\delta(S)$ has a SLP of size at most $g+1$, and (ii) \emph{rank} queries on $S$ can be simulated with a constant number of \emph{select} queries on $\delta(S)$. Then, the result follows from the hardness of \emph{rank}. 

\begin{definition}
	Let $S$ be a binary string of length $n$. With $\delta:\{0,1\}\rightarrow \{0,1\}^*$ we denote the function defined as $\delta(0) = 1$ and $\delta(1)=01$. With $\delta(S)$ we denote the string $\delta(S[1])\delta(S[2])\dots \delta(S[n])$.
\end{definition}

\begin{lemma}\label{lem:deltaS_attractor}
	If $S$ has a Straight Line Program of size $g$, then $\delta(S)$ has a Straight Line Program of size at most $g+1$.
\end{lemma}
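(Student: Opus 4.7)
The plan is to transform the given SLP for $S$ into an SLP for $\delta(S)$ by pushing the morphism $\delta$ through the grammar, exploiting the fact that $\delta$ is a homomorphism of the free monoid: $\delta(\alpha\beta) = \delta(\alpha)\delta(\beta)$ for all binary strings $\alpha,\beta$. Because of this property, if a nonterminal $A$ has rule $A\to BC$, then $\delta(\text{exp}(A)) = \delta(\text{exp}(B))\,\delta(\text{exp}(C))$, so a rule $A'\to B'C'$ where $B',C'$ already expand to the $\delta$-images of $B,C$ will make $A'$ expand to $\delta(\text{exp}(A))$.

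First I would handle the ``boundary'' between terminals and nonterminals. Since $\delta(0)=1$ is a single terminal, every occurrence of the terminal $0$ on the right-hand side of an original rule can simply be replaced by the terminal $1$ in the new grammar. The terminal $1$, however, is mapped to $\delta(1)=01$, which has length two and thus cannot replace a terminal inline without breaking the assumed binary-RHS restriction. I resolve this by introducing exactly one fresh nonterminal $X$ with rule $X\to 01$, so that $X$ expands to $\delta(1)$. Every occurrence of the terminal $1$ on a right-hand side is then replaced by the nonterminal $X$. This is the only extra nonterminal I will need, which accounts for the additive $+1$ in the bound.

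Next I would, for each original nonterminal $A_i$ with rule $A_i\to B_iC_i$, introduce a new nonterminal $A_i'$ with rule $A_i'\to \phi(B_i)\,\phi(C_i)$, where $\phi$ is defined by $\phi(0)=1$, $\phi(1)=X$, and $\phi(A_j)=A_j'$. A straightforward induction on the derivation depth of each $A_i$, using the homomorphism property of $\delta$, shows that $A_i'$ derives $\delta(\text{exp}(A_i))$. In particular, applying this to the original start symbol yields a nonterminal whose expansion is $\delta(S)$. Counting nonterminals gives $g$ (one per original) plus $X$, for a total of at most $g+1$, as claimed.

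I do not anticipate a real obstacle in this argument; the only subtle point is recognising that the binary-RHS convention prevents us from simply substituting a length-two string for the terminal $1$, which is precisely why a single auxiliary nonterminal $X$ is required and why the new grammar grows by one rather than stays at size $g$.
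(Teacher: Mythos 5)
Your proof is correct and follows essentially the same approach as the paper: introduce one fresh nonterminal $X\to 01$, replace terminal $0$ by $1$ and terminal $1$ by $X$ in all right-hand sides, and observe that the modified grammar generates $\delta(S)$ with size at most $g+1$. The explicit appeal to the homomorphism property and the induction on derivation depth are just a more detailed justification of what the paper compresses into ``it is easy to see.''
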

\begin{proof}
	It is sufficient to modify the Straight Line Program $G$ for $S$ as follows. First, we create a new nonterminal $X$ expanding to $01$ (this is not necessary if such a nonterminal already exists). Then, in the rules of $G$, we replace each terminal $1$ with $X$ and each terminal $0$ with $1$. It is easy to see that the resulting SLP --- of size at most $g+1$ --- generates $\delta(S)$.
\end{proof}

\begin{lemma}\label{lem:deltaS_select}
	$S.rank_1(i) = \delta(S).select_1(i)-i$.
\end{lemma}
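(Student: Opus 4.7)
The plan is to exploit a simple structural property of $\delta$: every position of $S$ contributes exactly one $1$ to $\delta(S)$. Indeed, $\delta(0) = 1$ contains one $1$, and $\delta(1) = 01$ also contains exactly one $1$ (its last character). Consequently, for any $j \in [1,n]$, the $j$-th $1$ of $\delta(S)$ is precisely the last character of the block $\delta(S[j])$, and hence it is the last character of the prefix $\delta(S[1])\delta(S[2])\cdots\delta(S[j])$.

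From this observation I would conclude that
\[
\delta(S).select_1(i) \;=\; |\delta(S[1])\delta(S[2])\cdots\delta(S[i])| \;=\; \sum_{j=1}^{i} |\delta(S[j])|.
\]
To evaluate this sum I would split the positions $j \in [1,i]$ according to whether $S[j]=0$ (contributing $|\delta(0)|=1$) or $S[j]=1$ (contributing $|\delta(1)|=2$). Letting $a = i - S.rank_1(i)$ be the number of zeros and $b = S.rank_1(i)$ the number of ones in $S[1..i]$, the sum equals $a + 2b = i + S.rank_1(i)$. Rearranging yields $S.rank_1(i) = \delta(S).select_1(i) - i$, which is the claim.

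There is no real obstacle here: the proof is a direct counting argument once the key observation ``each $S[j]$ generates exactly one $1$ in $\delta(S)$'' is isolated. The only care needed is to handle the boundary cleanly (the identity also holds trivially for $i=0$ under the convention $rank_1(0)=0$ and $select_1(0)=0$), and to note that the argument does not require any assumption on $S$ beyond being a binary string, so combined with Lemma~\ref{lem:deltaS_attractor} it transfers the $\Omega(\log n / \log\log n)$ lower bound on $rank_1$ to $select_1$ as intended.
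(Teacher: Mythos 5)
Your proof is correct and rests on the same key observation the paper uses: every character of $S$ generates exactly one $1$ in $\delta(S)$, so $\delta(S).select_1(i)$ is the last position of $\delta(S[i])$. The only cosmetic difference is bookkeeping: you compute that position directly by summing block lengths ($a\cdot 1 + b\cdot 2$), while the paper counts the zeros in the prefix and observes they are in bijection with the $1$'s of $S[1..i]$; these are two ways of tallying the same quantity, so the arguments are essentially identical.
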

\begin{proof}
	First, note that each character of $S$ generates exactly one bit set in $\delta(S)$.
	Then, $\delta(S).select_1(i)$ is the position of the last bit of the encoding of $S[i]$ in $\delta(S)$.
	
	Moreover, each bit equal to $1$ in $S$ generates exactly one $0$-bit in $\delta(S)$, while $0$-bits in $S$ do not generate $0$-bits in $\delta(S)$. 
	Then, the number of $0$'s before position $t$ in $\delta(S)$ --- which is $t-i = \delta(S).select_1(i)-i$ --- corresponds to $S.rank_1(i)$, i.e. our claim.
\end{proof}

The above lemmas imply that also $select_1$ queries cannot break the lower bound within grammar-compressed space.
Note that our lower bounds can trivially be extended to $rank_0$ and $select_0$ by simply flipping all bits (this operation does not increase the grammar size as it is sufficient to flip the two grammar's terminals).

\begin{theorem}\label{thm:low_bounds_g}
	Let $S$ be a string of length $n$, and let $g$ be the size of a Straight-Line Program for $S$.
	Then, $\Omega(\log n/\log\log n)$ time is needed to perform partial sum, rank, and select queries on $S$ within $O(g\ \mathrm{polylog}\ n)$ space.
\end{theorem}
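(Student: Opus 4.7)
The plan is to chain Theorem~\ref{th:low_bound_access} with the reductions already sketched in the text, being careful that each reduction preserves both the asymptotic grammar size and the query-time asymptotics. The argument is naturally split into three cases according to the query type.

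First I would handle \emph{rank} and \emph{partial sum} on binary strings. For any binary string $S$, we have the pointwise identities $S.access(i)=S.rank_1(i)-S.rank_1(i-1)$ and $S.access(i)=S.psum(i)-S.psum(i-1)$, with the convention $S.rank_1(0)=S.psum(0)=0$. Hence any data structure of size $O(g\ \mathrm{polylog}\ n)$ answering $rank_1$ (resp.\ $psum$) in time $T(n)$ answers access in time $2T(n)+O(1)$ within the same space. If $T(n)=o(\log n/\log\log n)$, this contradicts Theorem~\ref{th:low_bound_access}. Since a lower bound on binary strings is also a lower bound on the general integer-alphabet problem, we obtain $\Omega(\log n/\log\log n)$ for both $rank$ and $psum$.

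For \emph{select}, I would reduce $rank_1$ on $S$ to $select_1$ on $\delta(S)$ using the two lemmas already established. By Lemma~\ref{lem:deltaS_attractor}, if $S$ admits an SLP of size $g$ then $\delta(S)$ admits one of size $g+1$, so any data structure on $\delta(S)$ of size $O((g+1)\,\mathrm{polylog}\,|\delta(S)|)=O(g\ \mathrm{polylog}\ n)$ (using $|\delta(S)|\le 2n$) yields a data structure of the same asymptotic size indexed by the parameters of $S$. By Lemma~\ref{lem:deltaS_select}, the identity $S.rank_1(i)=\delta(S).select_1(i)-i$ turns a single $select_1$ query on $\delta(S)$ into a single $rank_1$ query on $S$ with only $O(1)$ extra work. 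Applying the already-proved lower bound for $rank$ transfers the $\Omega(\log n/\log\log n)$ bound to $select_1$; the remark on flipping bits (which leaves the grammar size unchanged) extends this to $rank_0$ and $select_0$.

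There is essentially no hard step: the lemmas do all the work, and the only thing worth double-checking is that constants do not degrade the space bound. Specifically, I would verify that (i) all reductions increase the grammar size by at most an additive constant, so $O(g\ \mathrm{polylog}\ n)$ is preserved; (ii) all reductions increase query time by only an additive $O(1)$, so an $o(\log n/\log\log n)$ query routine on the target problem would yield one on access, contradicting Theorem~\ref{th:low_bound_access}; and (iii) each reduction is made on a binary string, so the lower bound transfers to arbitrary alphabets. Putting the three cases together gives the stated bound.
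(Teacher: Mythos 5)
Your proposal is correct and matches the paper's argument exactly: the paper also chains Theorem~\ref{th:low_bound_access} through the identities $S.access(i)=S.rank_1(i)-S.rank_1(i-1)=S.psum(i)-S.psum(i-1)$ and then reduces $rank_1$ to $select_1$ via $\delta(S)$ using Lemmas~\ref{lem:deltaS_attractor} and~\ref{lem:deltaS_select}, with the bit-flipping remark handling $rank_0$/$select_0$. Your extra bookkeeping about additive constants in grammar size and query time is what the paper leaves implicit, but the reasoning is the same.
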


We now move to the well-studied \emph{predecessor} problem. Let $U\subseteq [1,n]$ be a set of integers of cardinality $m$. 
Beame and Fich~\cite[Cor 3.10]{beame2002optimal} proved that, using words of size $\log^{O(1)} n$, no static data structure of size $m^{O(1)}$ can answer predecessor queries in time $o(\sqrt{\log m/\log\log m})$. 
Note that the size $g$ of a straight-line program expanding to the distances between elements of $U$ could be much smaller than $m$; one of the consequences of this increased compression power is that we can improve Beame and Fich's lower bound within space bounded by $g$:

\begin{theorem}\label{th:low bound pred}
	Let $U\subseteq [1,n]$ be a set of size $m$, and let $S\in \{0,1\}^n$ be the bit-string representing $U$: $S[i] = 1$ iff $i\in U$. Let moreover $g$ be the size of a Straight-Line Program for $S$.
	Then, $\Omega(\log n/\log\log n)$ time is needed to perform predecessor queries on $U$ within $O(g\ \mathrm{polylog}\ n)$ space.
\end{theorem}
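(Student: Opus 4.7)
The plan is to reduce random access on $S$ to a single predecessor query on $U$, so that the hardness of Theorem~\ref{th:low_bound_access} transfers directly. This mirrors the strategy already used in this section for rank, select and partial sum, with the pleasant feature that the reduction becomes essentially trivial once one observes that the bit-string $S$ serves simultaneously as the input of the access lower bound and as the natural encoding of $U$.

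For the reduction itself, recall that $\mathrm{pred}_U(y)$ returns the largest element of $U$ not larger than $y$. Since $i \in U$ iff $S[i]=1$, I would observe that $S[i] = 1$ iff $\mathrm{pred}_U(i) = i$, and $S[i] = 0$ iff $\mathrm{pred}_U(i) < i$. Thus one predecessor query on $U$ decides the value of $S[i]$, and any data structure answering predecessor in time $T$ within $O(g\ \mathrm{polylog}\ n)$ space yields a data structure supporting access on $S$ with the same space and $O(T)$ time. Applying Theorem~\ref{th:low_bound_access} then forces $T = \Omega(\log n/\log\log n)$, which is exactly the claim.

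The only technicality I foresee, and thus the main (rather minor) obstacle, is the boundary case $i < \min U$, where $\mathrm{pred}_U(i)$ is undefined. To neutralise it I would prepend a $1$-bit sentinel to $S$, working with $S' = 1\cdot S$ and $U' = \{1\} \cup (U+1)$. As in Lemma~\ref{lem:deltaS_attractor}, introducing one extra nonterminal to concatenate a leading $1$ with $S$ increases the SLP size by at most one; the length and universe grow only by $1$; and the reduction above now applies without exceptions because $\mathrm{pred}_{U'}(i)$ is well defined for every $i \ge 1$. All asymptotic bounds are preserved, and the theorem follows.
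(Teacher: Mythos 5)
Your proof is correct and takes essentially the same route as the paper: a one-query reduction from \emph{access} to \emph{predecessor}, then invoking Theorem~\ref{th:low_bound_access}. The only cosmetic difference is that you query $\mathrm{pred}_U(i)$ directly (which matches the paper's stated ``not larger than'' convention cleanly), whereas the paper queries $\mathrm{pred}_U(i+1)$ and tests whether the answer is $i$; your sentinel for the $i<\min U$ boundary case is a careful but inessential extra.
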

\begin{proof}
	The proof is a straightforward reduction from \emph{access} queries. Let $S$ be a binary string, and $U$ be the set defined as $i\in U$ iff $S[i] = 1$. Then, for $i<n$, $S[i] = 1$ if and only if the predecessor of $i+1$ in $U$ is $i$. The lower bound follows from Theorem \ref{th:low_bound_access}.
\end{proof}

As noted in~\cite{kempa2018roots}, all the above lower bounds immediately extend to any compression scheme more powerful than grammars (including string attractors).

\begin{corollary}\label{cor:low_bound_compressors}
	Let $S$ be a string of length $n$, and let $\alpha$ be any of these measures of repetitivity of $S$: the size $\gamma$ of a string attractor, the size $g$ of a SLP, the size $g_{rl}$ of a RLSLP, the size $c$ of a collage system, the size $z$ of the LZ77 parse, the size $b$ of a macro scheme.
	Then, $\Omega(\log n/\log\log n)$ time is needed to perform partial sum, rank, and select queries on $S$ within $O(\alpha\ \mathrm{polylog}\ n)$ space. The same lower bound holds for \emph{predecessor} queries when $S$ is binary and represents a set of integers.
\end{corollary}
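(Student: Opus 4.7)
The plan is to derive this corollary as an immediate consequence of Theorems~\ref{thm:low_bounds_g} and~\ref{th:low bound pred}, exploiting the fact that the grammar size $g$ dominates (up to constant factors) every other compression measure appearing in the statement.

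First I would collect the dominance relations $\alpha \leq O(g)$ for every $\alpha \in \{\gamma, g_{rl}, c, z, b\}$. The inequalities $g_{rl} \leq g$ and $c \leq g$ hold essentially by definition, since an SLP is a special case of both an RLSLP and a collage system; $z \leq O(g)$ follows from the standard conversion of any SLP into an LZ77 parse of the same asymptotic size; $b \leq z$ since LZ77 is a particular macro scheme; and $\gamma \leq O(g)$ because every SLP induces a string attractor of size $O(g)$, as recalled in the discussion following Definition~\ref{def: string attractor} and proven in~\cite{kempa2018roots}.

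Next I would instantiate Theorem~\ref{thm:low_bounds_g} (respectively Theorem~\ref{th:low bound pred}) on its hard family of binary strings. For every string $S$ in such a family, the dominance relations above give $\alpha(S) \leq O(g(S))$, so any data structure using $O(\alpha\ \mathrm{polylog}\ n)$ space for $S$ is in particular of size $O(g\ \mathrm{polylog}\ n)$. The grammar-based lower bound then transfers verbatim, yielding the claimed $\Omega(\log n/\log\log n)$ bound on partial sum, rank, and select queries, and on predecessor queries in the binary case.

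I do not anticipate any real obstacle: the argument is the bookkeeping observation that shrinking the admissible space only \emph{strengthens} a time lower bound, combined with the well-known hierarchy of dictionary-compression measures. The only subtlety is to cite the non-trivial inequalities $z \leq O(g)$ and $\gamma \leq O(g)$ correctly; everything else is definitional.
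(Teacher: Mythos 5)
Your proposal is correct and follows the same logic as the paper: the paper's proof is the single sentence that the lower bounds ``immediately extend to any compression scheme more powerful than grammars,'' citing~\cite{kempa2018roots} for the dominance hierarchy, and your argument simply spells out the relevant inequalities $\gamma, g_{rl}, c, z, b \in O(g)$ and the observation that shrinking admissible space preserves a time lower bound. Essentially the same approach, just written out in more detail.
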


\section{Lower bounds on Dictionary-Compressed Trees}\label{sec:lower tree}

A space-efficient way to represent trees, that also supports fast navigational operations, is to encode their topology with a (binary) string. There are three principal ways to do this: LOUDS~\cite{jacobson1989space}, DFUDS~\cite{benoit2005representing}, and BP~\cite{jacobson1989space}. 
LOUDS requires just \emph{rank/select} support on a binary string, for which we already provided matching lower- and upper-bounds.  
We now extend our lower bounds to operations on balanced parentheses (DFUDS/BP), which will show the hardness of navigating a dictionary-compressed tree representation. 
DFUDS and BP require additional primitives to be supported on the underlying string of balanced parentheses:

\begin{enumerate}
	\item[(1)] \texttt{excess($i$)}: the difference between open and closed parentheses before position $i$ (included).
	\item[(2)] \texttt{findopen($i$)}: the position $j$ of the closed parenthesis matching the open parenthesis in position $i$. 
	\item[(3)] \texttt{findclose($i$)}: the inverse of \texttt{findopen}.
	\item[(4)] \texttt{fwd\_search($i,\delta$)}: the first position $j>i$ such that  \texttt{excess($j$)} = \texttt{excess($i$)$+\delta$}.
	\item[(5)] \texttt{bwd\_search($i,\delta$)}: as  \texttt{fwd\_search}, but looking backwards.
	\item[(6)] \texttt{rmq($i,j$)}/\texttt{RMQ($i,j$)}: minimum/maximum in \texttt{excess($i..j$)}.
	\item[(7)] \texttt{rmqi($i,j$)}/\texttt{RMQi($i,j$)}: leftmost position of a minimum/maximum in \texttt{excess($i..j$)}.
\end{enumerate}

For more details, Navarro~\cite{navarro2016compact} gives a complete and low-level description of the above operations and of how tree navigation queries can be reduced to them.
Alternative ways to compress trees include \emph{Tree Straight-Line Programs} (TSLPs), which are not covered here\footnote{It is worth to note, however, that the compression ratio of these representations is not as good as that of an SLP for the DFUDS string of small-degree trees~\cite[Thms 2, 4]{ganardi2018tree}.}.




We start by showing a reduction from $rank_1$ to \emph{excess}. Given any binary string $S$, we show how to build a string $\Delta(S)$ of balanced parentheses such that (i) the grammar-compressed representation of  $\Delta(S)$ is not much larger than that of $S$ and (ii) $S.rank_1$ can be solved with a constant number of $\Delta(S).excess$ queries. Note that, in the above definition, we add an extra pair of enclosing parentheses in order to make the sequence a tree (otherwise, the transformed string could represent a forest).

\begin{definition}\label{def:excess hard}
	Let $\delta(0) = $ \texttt{()} and $\delta(1) = $ \texttt{((}. When $S$ is a binary string of length $n$, we define $\Delta(S) = $ \texttt{(} $\cdot\  \delta(S[1])\ \cdots\ \delta(S[n])\ \cdot$ \texttt{)}$^{k+1}$, where $k=2\cdot S.rank_1(n)$.
\end{definition}

\begin{example}
	If $S = 00101$, then $\Delta(S) =$ \texttt{(()()((()(()))))}.
\end{example}

Note that $\Delta(S)$ is always balanced: first, we introduce an open parenthesis, then terms $\delta(0)$ are balanced and terms $\delta(1)$ introduce two unbalanced open parentheses each. Those $k=2\cdot S.rank_1(n)$ parentheses, plus the first open parenthesis are balanced in the final suffix \texttt{)}$^{k+1}$ of $\Delta(S)$. 

\begin{lemma}\label{lem:excess lower SLP}
	If $S\in\{0,1\}^n$ has a SLP of size $g$, then $\Delta(S)$ has a SLP of size $O(g+\log n)$.
\end{lemma}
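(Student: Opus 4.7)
The plan is to build the SLP for $\Delta(S)$ from three parts corresponding to the leading \texttt{(}, the middle sequence $\delta(S[1])\cdots\delta(S[n])$, and the trailing run \texttt{)}$^{k+1}$, and to account for the size of each separately.

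First, for the middle part, I would reuse the strategy of Lemma \ref{lem:deltaS_attractor}: introduce two new nonterminals $X \to $ \texttt{((} and $Y \to $ \texttt{()}, and in every rule of the SLP $G$ for $S$ replace each terminal $1$ by $X$ and each terminal $0$ by $Y$. This preserves the right-hand-side arity of $2$, and a straightforward induction on the derivation depth shows that the modified start symbol expands to $\delta(S[1])\cdots\delta(S[n])$. The resulting sub-SLP uses at most $g+2$ nonterminals.

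Second, for the closing run \texttt{)}$^{k+1}$, observe that $k+1 = 2\cdot S.rank_1(n) + 1 \leq 2n+1$. The standard doubling trick encodes a run of $m$ equal symbols with a binary SLP of $O(\log m)$ nonterminals: introduce $C_0 \to $ \texttt{)} and $C_{i+1} \to C_i C_i$ for $i = 0,\dots,\lceil\log_2(k+1)\rceil$, so that $C_i$ derives $2^i$ copies of \texttt{)}. Writing $k+1 = \sum_{i \in I} 2^i$ in binary, concatenate the symbols $\{C_i : i \in I\}$ using $|I|-1$ additional binary concatenation nonterminals arranged in a chain; since $|I| = O(\log n)$, this sub-SLP has size $O(\log n)$ and its start symbol expands to \texttt{)}$^{k+1}$.

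Finally, I would add a constant number of concatenation nonterminals to glue together the terminal \texttt{(}, the start symbol of the middle sub-SLP, and the start symbol of the closing sub-SLP (as a binary tree with three leaves). The resulting grammar has $(g+2) + O(\log n) + O(1) = O(g+\log n)$ nonterminals and derives exactly $\Delta(S)$. The main obstacle here is the tail run, which may contain up to $2n+1$ copies of \texttt{)}: a naive encoding would cost $\Omega(n)$ nonterminals, so the doubling construction is essential to meet the $O(g+\log n)$ bound; the remaining steps are minor adaptations of the construction used for Lemma \ref{lem:deltaS_attractor}.
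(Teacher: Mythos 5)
Your proof is correct and follows essentially the same approach as the paper: replace terminals of the SLP for $S$ with two new nonterminals expanding to \texttt{((} and \texttt{()}, encode the trailing run \texttt{)}$^{k+1}$ with $O(\log n)$ nonterminals, and glue the pieces together with a constant number of concatenation rules. The only difference is that you explicitly spell out the doubling construction for the run, which the paper leaves implicit.
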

\begin{proof}
	Let $G$ be a SLP for $S$.	
	We replace the terminal '0' with a nonterminal expanding to \texttt{()}, and the terminal '1' with a nonterminal expanding to \texttt{((}. We finally create at most $O(\log k) = O(\log n)$ new nonterminals to generate the final suffix \texttt{)}$^{k+1}$, and add two more rules to concatenate the resulting SLPs to the additional open parenthesis prefixing $\Delta(S)$.
\end{proof}

\begin{lemma}\label{lem:excess lower reduction}
	$S.rank_1(i) = (\Delta(S).excess(2i+1)-1)/2$.
\end{lemma}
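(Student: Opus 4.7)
My plan is to verify the identity by direct computation of $\Delta(S).excess(2i+1)$ from the definition of $\Delta(S)$ and then solve for $S.rank_1(i)$. First I would pin down the indexing: the leading \texttt{(} occupies position $1$, and for $j=1,\dots,n$ the two-character block $\delta(S[j])$ occupies positions $2j$ and $2j+1$. Hence the prefix $\Delta(S)[1..2i+1]$ is exactly \texttt{(} $\cdot\ \delta(S[1])\cdots\delta(S[i])$, and for $i\leq n$ this prefix stops strictly before the trailing block \texttt{)}$^{k+1}$, which begins at position $2n+2$.

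Next I would tally the contribution of each piece of this prefix to the excess. The leading \texttt{(} contributes $+1$. Each $\delta(0)=$ \texttt{()} is itself balanced and contributes $0$. Each $\delta(1)=$ \texttt{((} contributes $+2$. Summing these contributions over $j=1,\dots,i$ gives
\[
\Delta(S).excess(2i+1) \;=\; 1 \;+\; 0\cdot\bigl(i - S.rank_1(i)\bigr) \;+\; 2\cdot S.rank_1(i) \;=\; 1 + 2\cdot S.rank_1(i),
\]
and rearranging yields the claimed identity $S.rank_1(i) = (\Delta(S).excess(2i+1)-1)/2$.

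The argument is essentially mechanical, so I expect no genuinely hard step; the main thing to double-check is precisely the indexing and range claim above, namely that $2i+1$ lands at the end of the $i$-th $\delta$-block rather than inside one and stays within the $\delta$-encoded portion, so that the three-way case analysis ($+1$ for the leading open, $0$ for each $\delta(0)$, $+2$ for each $\delta(1)$) exhausts the prefix. Integrality of the right-hand side is then automatic from the formula, since $\Delta(S).excess(2i+1)$ is always odd.
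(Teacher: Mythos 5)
Your proof is correct and follows essentially the same approach as the paper: both compute $\Delta(S).excess(2i+1)$ by tallying the contributions of the leading open parenthesis, the balanced $\delta(0)$ blocks, and the doubly-unbalanced $\delta(1)$ blocks, arriving at $1+2\cdot S.rank_1(i)$. Your per-block accounting of net excess is arithmetically identical to the paper's separate counts of opens and closes, and your explicit check that position $2i+1$ lands at the end of the $i$-th $\delta$-block is a sensible (if implicit in the paper) sanity check.
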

\begin{proof}
Assume 	$S.rank_1(i)=t$. Then, $S.rank_0(i)=i-t$. Since each '0' in $S$ generates one open and one close parenthesis in $\Delta(S)$, and each '1' in $S$ generates two open parentheses in $\Delta(S)$ and taking into account the extra open parenthesis at the beginning of $\Delta(S)$, the number of open parentheses before $\Delta(S)[2i+1]$ is $1+ 2\cdot t + 1\cdot(i-t)$. Similarly, the number of close parentheses before $\Delta(S)[2i+1]$ is $i-t$. Then, by definition \emph{excess} is precisely the difference between these two values:  $\Delta(S).excess(2i+1) = 1+ 2\cdot t + 1\cdot(i-t) - (i-t) = 2\cdot t +1 = 2\cdot S.rank_1(i) + 1$. Our claim follows. 
\end{proof}

Suppose, by contradiction, that there is a structure supporting $o(\log n'/\log\log n')$-time \emph{excess} queries on a length-$n'$ sequence $B$ within $O(g'\rm {polylog}\ n')$ space, where $g'$ is the size of any SLP compressing $B$. Then, given any binary string $S\in\{0,1\}^n$ with SLP of size $g$, we can build $\Delta(S)$ of length $n'=\Theta(n)$, which by Lemma \ref{lem:excess lower SLP} has a SLP of size $g' = O(g + \log n') = O(g + \log n)$. By Lemma \ref{lem:excess lower reduction} we can use our hypothetical \emph{excess} structure to answer $rank_1$ queries on $S$ in $o(\log n'/\log\log n')$ = $o(\log n/\log\log n)$ time and $O(g'\rm {polylog}\ n') = O(g\ \rm {polylog}\ n)$ space, which by Theorem \ref{cor:low_bound_compressors} is a contradiction. This completes our hardness proof for \emph{excess}.


We now reduce $rank_1$ to \emph{findclose}. Given any binary string $S$, we show how to build a string $\Delta(S)$ of balanced parentheses such that (i) the grammar-compressed representation of  $\Delta(S)$ is not much larger than that of $S$ and (ii) $S.rank_1$ can be solved with a constant number of $\Delta(S).findclose$ queries. The solution for \emph{findopen} is symmetric and is not considered here.

\begin{definition}
	Let $\delta(0) = $ \texttt{)} and $\delta(1) = $ \texttt{())}. When $S$ is a binary string of length $n$, we define $\Delta(S) = $ \texttt{(}$^n\ \cdot\ \delta(S[1])\ \cdots\ \delta(S[n])$.
\end{definition}

Note that $\Delta(S)$ is always balanced: we first open $n$ parentheses, and then each term $\delta(S[i])$ adds an unmatched closed parenthesis.

\begin{example}
	If $S = 00101$, then $\Delta(S) =$ \texttt{((((())()))())}.
\end{example}

The proof of the following lemma is analogous to that of Lemma \ref{lem:excess lower SLP}, and for space reasons it is omitted here. 

\begin{lemma}\label{lem:findclose lower SLP}
	If $S\in\{0,1\}^n$ has a SLP of size $g$, then $\Delta(S)$ has a SLP of size $O(g+\log n)$.
\end{lemma}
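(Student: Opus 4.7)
The plan is to mimic exactly the construction in Lemma \ref{lem:excess lower SLP}, adapted to the new definition of $\delta$ and the new prefix $\texttt{(}^n$. Let $G$ be a Straight Line Program of size $g$ for $S$.

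First I would transform $G$ into a SLP $G'$ generating the suffix $\delta(S[1])\cdots\delta(S[n])$ of $\Delta(S)$. Since $\delta(0)=\texttt{)}$ is a single terminal, in every rule of $G$ I simply replace the terminal $0$ by the terminal $\texttt{)}$. Since $\delta(1)=\texttt{())}$ has length three, I introduce one new nonterminal $Y\to \texttt{()}\,\texttt{)}$ (using two auxiliary nonterminals expanding to $\texttt{()}$ and $\texttt{)}$ if one insists on binary right-hand sides; this adds only $O(1)$ symbols) and replace every occurrence of the terminal $1$ in $G$ by $Y$. The resulting SLP has size $g+O(1)$ and expands to $\delta(S[1])\cdots\delta(S[n])$.

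Next I would build a SLP of size $O(\log n)$ generating the prefix $\texttt{(}^n$ by standard repeated doubling: introduce nonterminals $A_0\to\texttt{(}$ and $A_{i+1}\to A_iA_i$ for $i=0,\dots,\lfloor\log n\rfloor$, so that $A_i$ expands to $\texttt{(}^{2^i}$. Using the binary expansion of $n$, I concatenate an appropriate subset of the $A_i$ by introducing at most $\lfloor\log n\rfloor$ additional binary rules, obtaining a nonterminal $P$ expanding to $\texttt{(}^n$. This step contributes $O(\log n)$ nonterminals.

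Finally I would add one last nonterminal $Z\to P\,S'$, where $S'$ is the start symbol of $G'$, so that $Z$ expands to $\Delta(S) = \texttt{(}^n\cdot\delta(S[1])\cdots\delta(S[n])$. Summing up the contributions gives a SLP of size $g+O(1)+O(\log n)+O(1)=O(g+\log n)$, as claimed. There is no real obstacle here; the only point that requires any care is checking that the repeated-doubling construction combined with the binary expansion of $n$ uses $O(\log n)$ rules with binary right-hand sides, which is a routine fact about SLPs for unary strings.
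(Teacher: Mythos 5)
Your proposal is correct and follows essentially the same route the paper indicates (the paper omits the proof, saying it is ``analogous'' to Lemma~\ref{lem:excess lower SLP}): replace terminals by small constant-size gadgets for $\delta(0)$ and $\delta(1)$, generate the unary prefix $\texttt{(}^n$ with $O(\log n)$ doubling rules, and concatenate with $O(1)$ extra rules. Your care about keeping right-hand sides binary is a nice touch that matches the paper's convention.
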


We obtain the following reduction:

\begin{lemma}\label{lem:findclose lower reduction}
	$S.rank_1(i) = \big(\Delta(S).findclose(n-i+1) - n - i\big)/2$.
\end{lemma}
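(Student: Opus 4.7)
The plan is to compute $\Delta(S).findclose(n-i+1)$ explicitly by exploiting the nesting structure of $\Delta(S)$, and then rearrange to solve for $S.rank_1(i)$.

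First, I would analyze the shape of $\Delta(S)$. It starts with the prefix \texttt{(}$^n$ of $n$ open parentheses at positions $1,\dots,n$, followed by the suffix $\delta(S[1])\cdots\delta(S[n])$ starting at position $n+1$. Within the suffix, each $\delta(0)=$ \texttt{)} contributes a single unmatched close parenthesis, while each $\delta(1)=$ \texttt{())} contributes an internally matched pair \texttt{()} followed by one unmatched close parenthesis. Hence the suffix contains exactly $n$ unmatched close parentheses (one per position $j\in[1,n]$), each appearing as the last character of its block $\delta(S[j])$.

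Next, I would invoke the nesting property of balanced parenthesis sequences: the $n$ initial opens must be matched, in LIFO order, with the $n$ unmatched closes in the suffix. Therefore the open parenthesis at position $k\in[1,n]$ is matched precisely with the $(n-k+1)$-th unmatched close of the suffix. Setting $k=n-i+1$, we get that $\Delta(S).findclose(n-i+1)$ is the position of the $i$-th unmatched close, i.e.\ the position of the final character of the block $\delta(S[i])$.

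It then remains a routine length computation. Since $|\delta(0)|=1$ and $|\delta(1)|=3$, the prefix $\delta(S[1])\cdots\delta(S[i])$ has length
\begin{equation*}
\sum_{j=1}^{i}|\delta(S[j])| \;=\; 3\cdot S.rank_1(i) + 1\cdot S.rank_0(i) \;=\; 2\cdot S.rank_1(i) + i.
\end{equation*}
Adding the offset $n$ of the initial opens gives $\Delta(S).findclose(n-i+1) = n + 2\cdot S.rank_1(i) + i$, which rearranges to the claimed identity.

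The only genuinely non-mechanical step is the matching argument in the second paragraph, but this is a direct consequence of the LIFO semantics of balanced parentheses and the observation that the internal pairs of each $\delta(1)$-block are self-contained and thus irrelevant to the pairing of the $n$ initial opens.
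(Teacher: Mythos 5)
Your proof is correct and follows the same approach as the paper: identify $\Delta(S).findclose(n-i+1)$ as the position of the last character of the block $\delta(S[i])$, then compute that position by a length count $n + 3\cdot S.rank_1(i) + 1\cdot S.rank_0(i)$. You are slightly more explicit about the LIFO matching that justifies the identification, but the argument is the same.
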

\begin{proof}
	The idea behind the proof is the following. To solve rank, we first ``jump'' on the $i$-th last open parentheses $\Delta(S)[n-i+1]$ in the prefix of length $n$ of $\Delta(S)$. Then, the corresponding closed parentheses $\Delta(S).findclose(n-i+1)$ is the last parenthesis of $\delta(S[i])$ (note that each character of $S$ generates exactly one locally-unmatched closed parenthesis in $\Delta(S)$, which is matched in the prefix \texttt{(}$^n$).
	Let $S.rank_1(i)=t$.
	It follows that before (and including) position $\Delta(S).findclose(n-i+1)$ there are: $n$ parentheses (the prefix \texttt{(}$^n$ of $\Delta(S)$), plus $3t$ parentheses (three for each '1' in $S[1..i]$), plus $i-t$ parentheses (one for each '0' in $S[1..i]$). We conclude that  $\Delta(S).findclose(n-i+1) = n + 3t + (i-t) = n + 2t + i$. Our claim follows.
\end{proof}

Note that operation \emph{findopen} is symmetric to \emph{findclose}, so its hardness is immediate. Finally, the lower bounds automatically transfer to \texttt{fwd\_search} and \texttt{bwd\_search} since they can be used to implement \emph{findopen} and \emph{findclose} (see also \cite{navarro2016compact}):
\texttt{findclose($i$)} = \texttt{fwd\_search($i,-1$)} and \texttt{findopen($i$)} = \texttt{bwd\_search($i,0$)$+1$}.

To prove the hardness of range queries, consider the string $\Delta(S)$ of Definition \ref{def:excess hard}. Clearly, the maximum excess in $\delta(0)$ is reached in the first parenthesis. On the other hand, the maximum excess in $\delta(1)$ is reached in the second parenthesis. This shows that \texttt{RMQi} and \texttt{rmqi} can be used to answer \emph{access} queries: the maximum (resp. minimum) excess in the length-$2$ substring $\Delta(S)[k,k+1]$ corresponding to $\delta(S[i])$ is in position $k$ if and only if $S[i] = 0$ (resp. 1). Similarly, we can solve \texttt{RMQi} (resp. \texttt{rmqi}) by issuing two \texttt{RMQ} (resp. \texttt{rmq}) in the unary ranges $\Delta(S)[k]$ and $\Delta(S)[k+1]$. We finally obtain our result, stated in the most general form: 

\begin{theorem}\label{cor:low_bound_trees}
Let $S$ be a balanced parentheses sequence of length $n$, and let $\alpha$ be any of these measures of repetitivity of $S$: the size $\gamma$ of a string attractor, the size $g$ of a SLP, the size $g_{rl}$ of a RLSLP, the size $c$ of a collage system, the size $z$ of the LZ77 parse, the size $b$ of a macro scheme.
Then, $\Omega(\log n/\log\log n)$ time is needed to perform operations (1-7) on $S$ within $O(\alpha\ \mathrm{polylog}\ n)$ space.
\end{theorem}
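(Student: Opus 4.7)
The plan is to harvest the reductions already developed in this section and unify them across operations (1)--(7), then lift the resulting SLP-bounded hardness to every repetitivity measure $\alpha$ in the statement. Since essentially all the reductions have been set up above, the argument I would write is structural rather than computational.

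For operations (1)--(5) the work is done: \emph{excess} is handled by combining Lemmas \ref{lem:excess lower SLP} and \ref{lem:excess lower reduction} with the $rank_1$ hardness of Corollary \ref{cor:low_bound_compressors}; \emph{findclose} is handled analogously via Lemmas \ref{lem:findclose lower SLP} and \ref{lem:findclose lower reduction}, and \emph{findopen} is symmetric. \texttt{fwd\_search} and \texttt{bwd\_search} inherit the bound through the identities \texttt{findclose($i$)} $=$ \texttt{fwd\_search($i,-1$)} and \texttt{findopen($i$)} $=$ \texttt{bwd\_search($i,0$)}$+1$, since any $o(\log n/\log\log n)$-time implementation of the searches would immediately give one for \emph{findclose}/\emph{findopen} within the same space, contradicting the already established bound.

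For the range operations (6)--(7) I would reuse the string $\Delta(S)$ of Definition \ref{def:excess hard}: in each two-character block $\delta(S[i])$ the excess maximum is attained at the first parenthesis iff $S[i]=0$ and at the second iff $S[i]=1$, and symmetrically for the minimum. Hence \texttt{RMQi} and \texttt{rmqi} on the length-$2$ range covering $\delta(S[i])$ resolve $S.access(i)$, and \texttt{RMQ}/\texttt{rmq} yield \texttt{RMQi}/\texttt{rmqi} on such a range by issuing two unary queries and comparing; both reductions use $O(1)$ calls, so they preserve the $\Omega(\log n/\log\log n)$ lower bound inherited from \emph{access}.

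The closing step is to pass from the measure $g$ to the full list $\alpha\in\{\gamma,g_{rl},c,z,b\}$. Each of these measures is $O(g)$ on every string, because any SLP induces an attractor, an RLSLP, a collage system, an LZ77 parse and a macro scheme of asymptotically the same size; therefore, on the adversary family used for the SLP lower bound, any structure of size $O(\alpha\ \mathrm{polylog}\ n)$ is also of size $O(g\ \mathrm{polylog}\ n)$ and the reductions above apply verbatim. The only mild obstacle is to absorb the additive $O(\log n)$ overhead incurred by Lemmas \ref{lem:excess lower SLP} and \ref{lem:findclose lower SLP} on the transformed strings into the $\mathrm{polylog}\ n$ factor of the space bound; this is automatic, and no substantive difficulty remains.
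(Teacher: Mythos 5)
Your proposal is correct and mirrors the paper's own argument: operations (1)--(5) are handled by the two $\Delta$-transformations together with Lemmas~\ref{lem:excess lower SLP}--\ref{lem:findclose lower reduction} and the identities reducing the search primitives to \emph{findopen}/\emph{findclose}, while operations (6)--(7) reuse the $\Delta(S)$ of Definition~\ref{def:excess hard} to reduce \emph{access} to \texttt{RMQi}/\texttt{rmqi} and then to \texttt{RMQ}/\texttt{rmq} via unary ranges. Your closing step spells out more explicitly than the paper why the $g$-space lower bound transfers to all listed measures $\alpha$ (each is $O(g)$ on every string, so $O(\alpha\,\mathrm{polylog}\,n)\subseteq O(g\,\mathrm{polylog}\,n)$, with the additive $O(\log n)$ absorbed into the polylog), but this is the same observation the paper invokes via Corollary~\ref{cor:low_bound_compressors}.
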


\section{Upper Bounds}

In this section we provide upper bounds matching our lower bounds on all known dictionary-compressed string representations. 

\subsection{Partial Sums}

We support \emph{partial sum} queries by generalizing the \emph{rank} solution presented by Belazzougui et al.~\cite{belazzougui2015queries} (designed om block trees) as follows. We divide the text in $\gamma$ blocks of length $n/\gamma$. We call this the \emph{level 0} of our structure. 
We keep $\log_\tau(n/\gamma)$ further levels: for each attractor position $i$, at level $j\geq 1$ we store some information about $2\tau$ non-overlapping and equally-spaced substrings of $S$ (we call them blocks) centered on $i$ and whose length exponentially decreases with the level $j$ (i.e. at level $j=0$ the block length is $n/\gamma$, at level $1$ it is $n/(\tau\gamma)$, at level $2$ it is $n/(\tau^2\gamma)$, and so on). For each block, we store some partial sum information and a pointer to one of its occurrences crossing an attractor position (which exist by definition of attractor). Then, a \emph{partial sum} query is answered by navigating the structure from the first to last level.
All details are reported in the following Theorem. 

\begin{theorem}\label{th:PS}
	Let $\gamma$ be the size of an attractor for a string $S$ of length $n$ over an integer alphabet. Then, for all $2\leq \tau \leq n/\gamma$, we can store a data structure of size $O(\tau \gamma\log_\tau(n/\gamma))$ supporting \texttt{partial sum} queries on $S$ in $O(\log_\tau(n/\gamma))$ time. 
\end{theorem}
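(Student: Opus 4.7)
The plan is to mirror the block-tree construction of Belazzougui et al.~\cite{belazzougui2015queries} for \emph{rank}, replacing counts of $1$-bits by sums of integer symbols. The structure we would build has $O(\log_\tau(n/\gamma))$ levels. At level~$0$ we partition $S$ into $\gamma$ equal blocks of length $L_0 = n/\gamma$ and precompute the prefix-sum array of their $\gamma$ block sums, so that the contribution of any whole-block prefix is returned in $O(1)$ time. At every level $j\ge 1$ and every attractor position $p\in\Gamma$ we store $2\tau$ equi-spaced, non-overlapping subblocks of length $L_j = L_{j-1}/\tau$ centered at $p$, together with the $2\tau$ cumulative partial sums $C_j[p][\cdot]$ of these subblocks. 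Finally, for every level-$(j-1)$ block $B$ that does not already cross an attractor we attach a pointer $\pi(B)$ to one occurrence of $B$ that crosses some attractor $p$; such an occurrence exists by Definition~\ref{def: string attractor} and fits inside the $2L_{j-1}$-long region spanned by the $2\tau$ subblocks around $p$, since each subblock has length $L_j = L_{j-1}/\tau$.

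To answer $S.psum(i)$ we descend the structure maintaining the invariant that the running accumulator plus the partial sum over the first $i_j$ positions of the current block $B_j$ equals $S.psum(i)$. The query starts at the level-$0$ block $B_0$ containing $i$: one integer division locates $B_0$ and computes the offset $i_0$, and the prefix-sum array at level~$0$ adds the contribution of all blocks preceding $B_0$ to the accumulator. To advance from level $j$ to $j+1$ we (i) dereference $\pi(B_j)$ to recover an attractor $p$ and the leftmost index $l$ of the $\tau$ level-$(j+1)$ subblocks around $p$ covered by the occurrence of $B_j$; (ii) translate $i_j$ into the corresponding offset inside the subblock grid around $p$ and, by a single subtraction and integer division, locate the subblock $B_{j+1}$ containing the image of $i_j$ (this is $O(1)$ precisely because the $2\tau$ subblocks around $p$ are equi-spaced, so no predecessor search is needed); (iii) use the telescoping difference of two entries of $C_{j+1}[p][\cdot]$ to add the sum of all subblocks strictly preceding $B_{j+1}$ inside the occurrence to the accumulator; and (iv) update $i_{j+1}$ to the residual offset inside $B_{j+1}$. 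At the bottom level subblocks have length $1$ and the residual contribution is read off directly.

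The cost is easy to verify: $O(\tau\gamma)$ words per level (the cumulative sums $C_j$ plus one pointer per block) times $O(\log_\tau(n/\gamma))$ levels yields the claimed space bound, and $O(1)$ arithmetic work per level yields the claimed query time. The main technical obstacle will be ensuring that $\pi(B_j)$ points to an occurrence of $B_j$ whose endpoints align with the level-$(j+1)$ subblock grid around $p$; otherwise the telescoping of $C_{j+1}[p][\cdot]$ in step~(iii) would miss a fractional subblock and force a second recursion per level, breaking the $O(1)$-per-level budget. We plan to resolve this exactly as in the block-tree framework: keeping $2\tau$ rather than $\tau$ subblocks per attractor provides the slack needed to shift every occurrence so that its endpoints land on subblock boundaries, at the price of only a constant factor in space. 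Beyond this alignment step the construction and analysis are essentially identical to those of~\cite{belazzougui2015queries}, the single novelty being the substitution of cumulative sums of integers for cumulative counts of $1$-bits; in particular the attractor is used as-is, so its size $\gamma$ is preserved throughout.
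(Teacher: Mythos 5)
Your high-level architecture is right, and you correctly pinpoint the crucial difficulty (the occurrence of a block need not align with the subblock grid around the attractor it crosses). However, the fix you propose is not the one the paper uses, and as stated it does not work. An occurrence is not something you get to shift: the pointer must go to \emph{some} occurrence of $B_j$ that crosses an attractor position $p$, and nothing guarantees that its left endpoint lands at a multiple of $\ell_{j+1}$ measured from $p$. The reason for keeping $2\tau$ rather than $\tau$ subblocks per attractor is not alignment slack; it is that $p$ can fall anywhere inside the occurrence of $B_j$, so $\tau$ subblocks are needed on \emph{each} side of $p$ to cover it. That extra factor buys you nothing toward the fractional-subblock problem, and your step~(iii), which telescopes $C_{j+1}[p][\cdot]$ from the occurrence's left endpoint, would indeed leave a dangling fragment at both ends and break the $O(1)$-per-level budget --- exactly the failure you yourself anticipate.

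The paper's actual resolution is the one substantive idea missing from your write-up: anchor the decomposition at the attractor position $p$ itself, not at the left endpoint of the occurrence. Concretely, for each block $B$ with pointer $(i,v)$ the paper additionally stores $sum(B[1..v])$, the partial sum of the prefix of $B$ that ends immediately left of the attractor. To answer a prefix query $sum(B[1..t])$, one splits at $v$: if $t>v$ then $sum(B[1..t]) = sum(B[1..v]) + sum(B[v+1..t])$, and if $t\le v$ then $sum(B[1..t]) = sum(B[1..v]) - sum(B[t+1..v])$. The residual range ($B[v+1..t]$ or $B[t+1..v]$) is flush against position $p$ by construction, and the subblock grid around $p$ is, by definition, anchored at $p$ (with $S[p]$ stored separately and not contained in any subblock). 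Hence the residual decomposes cleanly into the symbol $S[p]$, a run of full subblocks whose cumulative sums are precomputed, and at most one fractional subblock (a prefix or, after a subtraction using the stored total $sum$ of the block, a prefix again) on which you recurse. This guarantees one recursion per level with no alignment assumption whatsoever. Without the stored $sum(B[1..v])$ and the split at $v$, the scheme genuinely needs a second recursion per level, so this is a real gap and not a presentational difference.
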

\begin{proof}
	For simplicity, we assume that $\gamma$ divides $n$ and that $\log_\tau(n/\gamma)$ is an integer. Our structure is composed of $\log_\tau(n/\gamma)+1$ levels. Level $j\geq 0$ contains a set of blocks (i.e. text substrings), each of length $\ell_j = n/(\gamma\cdot \tau^j)$ and defined as follows. In the first level, number $0$, our blocks are the $\gamma$ contiguous and non-overlapping $S$-substrings of length $n/\gamma$: $B_{0,k} = S[(k-1)\cdot(n/\gamma)+1..k\cdot(n/\gamma)]$, for $k=1,\dots, \gamma$. At level $j\geq 1$, blocks are instead centered around attractor elements. Let $i$ be an attractor position. Then, at level $j\geq 1$ we store the $2\tau$ blocks $\overleftarrow{B}_{i,j,k} = S[i-\ell_j \cdot k..i-\ell_j \cdot (k-1)-1]$, for $k=1,\dots, \tau$, and $\overrightarrow{B}_{i,j,k} = S[i+1+\ell_j \cdot (k-1)..i+\ell_j \cdot k]$, for $k=1,\dots, \tau$ (note: $\overleftarrow{B}_{i,j,k}$ are on the left of attractor position $i$, blocks $\overrightarrow{B}_{i,j,k}$ are on its right, and none of the blocks intersects $i$). 
	
	Note: clearly, we do not explicitly store the text substrings associated with each block. Each block will store just a constant amount of information (detailed below) that will be used to answer partial sum queries. However, letting $B$ be a block, to simplify notation in the following we will also use the symbol $B$ to indicate the substring represented by the block $B$. In this sense, $|B|$ will refer to the substring's length. The use will be clear from the context. 
	 
	Each block $B$ at level $j\geq 0$ stores a pointer to one of its occurrences (as a string) at level $j+1$ crossing an attractor position (at least one occurrence of this kind exists by definition of $1$-adjacent attractor). This pointer is simply a pair $(i,v)$, where $i$ is the attractor position and $v$ is such that $S[i-v..i-v+|B|-1] = B$. Crucially, note that the substring $S[i-v..i-v+|B|-1]$ is completely covered by contiguous and non-overlapping blocks of length $|B|/\tau$ at level $j+1$, except possibly position $S[i]$ that is not included in any of those blocks (this will be used to devise a recursive strategy).
	
	In the last level $j=\log_\tau(n/\gamma)$ (where the block size is $\ell_{\log_\tau(n/\gamma)}=1$) we explicitly store in $2\gamma\tau$ words the strings representing the blocks. 
	We also store the character $S[i]$ under each attractor position $i$. Note that $S[i]$ can be retrieved in constant time from $i$ using, e.g. perfect hashing.  
		
	We  associate to each block some partial sum information. 
	To simplify notation, let $sum(B)$ denote the sum of all integers in the string $B$.
	
	\begin{itemize}
		\item[(a)] For each block $B$ at any level, let $(i,v)$ be the pointer associated with it. Then, we associate to $B$ the value $sum(B[1..v])$.
		\item[(b)] At level 0, each block $B_{0,k}$ with $k = 1, \dots, \gamma$, stores $sum(B_{0,1} \cdots B_{0,k-1})$, i.e. the sum of all integers preceding the block in the input string (this value is 0 for $B_{0,1}$).
		\item[(c)] At levels $j\geq 1$, each block $B$ stores $sum(B)$.
		\item[(d)] Blocks $\overleftarrow{B}_{i,j,k}$ moreover store the partial sum $sum(\overleftarrow{B}_{i,j,k} \cdots \overleftarrow{B}_{i,j,1})$, for $k=1,\dots, \tau$, while blocks $\overrightarrow{B}_{i,j,k}$ store the partial sum $sum(\overrightarrow{B}_{i,j,1} \cdots \overrightarrow{B}_{i,j,k})$, for $k=1,\dots, \tau$. 
	\end{itemize}
	
	Overall, we store $O(1)$ words per block. It follows that our structure fits in $O(\tau \gamma\log_\tau(n/\gamma))$ words. We now show how to efficiently answer partial sum queries using this information. 
	
	To answer $S.psum(v)$ we proceed as follows. Let $k = \lfloor (v-1)/(n/\gamma)\rfloor +1$ and $v' = \left((v-1)\ \rm{mod}\ (n/\gamma)\right)+1$. Then, $S.psum(v) = sum(B_{0,1}, \dots, B_{0,k-1}) + B_{0,k}.psum(v')$. The first term $sum(B_{0,1}, \dots, B_{0,k-1})$ is explicitly stored (read point (b) above). To compute $B_{0,k}.psum(v')$, note that this is a block prefix; we now show how to compute the sum of the integers in the prefix of any block at level $j\geq 0$ by reducing the problem to that of computing the sum of the integers in a prefix of a block at level $j+1$. The answer in the last level $j=\log_\tau(n/\gamma)$ (where the block size is $\ell_{\log_\tau(n/\gamma)}=1$) can be obtained in constant time since we explicitly store the integers contained in the blocks.
	
	Let us show how to compute $sum(B[1..t])$ at level $j\geq 0$, for some $t \leq \ell_j$ and some level-$j$ block $B$. First, we map $B[1..t]$ to level $j+1$ using its associated pointer $(i,v)$. We distinguish two cases. 
	
	\begin{itemize}
		\item[(1)] If $t > v$, then $sum(B[1..t]) = sum(B[1..v]) + sum(B[v+1..t])$. The first term, $sum(B[1..v])$ is explicitly stored (read point (a) above). Note that the string appearing in the second term, $B[v+1..t]$, prefixes $S[i]\cdot \overrightarrow{B}_{i,j+1,1} \cdots \overrightarrow{B}_{i,j+1,\tau}$. We can therefore decompose $B[v+1..t]$ into $S[i]$, followed by a (possibly empty) prefix of $d = \lfloor (t-v-1)/\ell_{j+1} \rfloor$ blocks of length $\ell_{j+1}$ --- i.e. the prefix $\overrightarrow{B}_{i,j+1,1}, \dots, \overrightarrow{B}_{i,j+1,d}$ ---, followed by a  (possibly empty) suffix of length $l = (t-v-1)\ \rm{mod}\ \ell_{j+1}$. We can retrieve in constant time the sum (explicitly stored, see point (d) above) of the integers contained in the prefix of full blocks, as well as the value $S[i]$. As far as the remaining suffix of $B[v+1..t]$ is concerned, note that it coincides with the block prefix $\overrightarrow{B}_{i,j+1,d+1}[1..l]$ and we can thus compute the corresponding partial sum by recursing our strategy.
		\item[(2)] If $t \leq v$, then $sum(B[1..t]) = sum(B[1..v]) - sum(B[t+1..v])$. The first term, $sum(B[1..v])$ is explicitly stored (read point (a) above). The second term can be computed with a strategy completely symmetric to that described in point (1). Let $d = \lfloor(v-t)/\ell_{j+1}\rfloor$ and $l=(v-t)\ \rm{mod}\ \ell_{j+1}$. We decompose $B[t+1..v]$ into the prefix $\overleftarrow{B}_{i,j+1,d+1}[(\ell_{j+1}-l+1)..\ell_{j+1}]$ (note: this is a block suffix) followed by the suffix $\overleftarrow{B}_{i,j+1,d}\cdots \overleftarrow{B}_{i,j+1,1}$.
		The sum of integers in the suffix of full blocks is explicitly stored (point (d) above), so we are left with the problem of computing the sum in $\overleftarrow{B}_{i,j+1,d+1}[(\ell_{j+1}-l+1)..\ell_{j+1}]$, which is a block suffix. Since we explicitly store $sum(\overleftarrow{B}_{i,j+1,d+1})$ (point (c) above), we can, also in this case, reduce the problem to that of computing the sum in a prefix of a block at level $j+1$ (i.e. the block prefix $\overleftarrow{B}_{i,j+1,d+1}[1..\ell_{j+1}-l]$) with a simple subtraction.		
	\end{itemize}
	
	The strategy above described allows us to compute $S.psum(v)$ with a single descent from the first to last level. At each level we spend constant time. It follows that the overall procedure terminates in $O(\log_\tau(n/\gamma))$ time.	
\end{proof}

For $\tau=\log^{\epsilon}n$ and any constant $\epsilon>0$ our structure takes 
$O\left (\gamma \log^\epsilon n\log(n/\gamma) /\log\log n\right)$
words of space and answers queries in $O(\log(n/\gamma)/\log\log n)$ time. This matches the lower bound stated in Theorem \ref{cor:low_bound_compressors}.

\subsection{Rank}

Clearly, on binary strings it holds that $S.rank_1(i) = S.psum(i)$ so our problem is already solved in this case by Theorem \ref{th:PS}. Given a string $S$ over a generic alphabet of size $\sigma$, we can solve $S.rank_c()$ as follows. We build $\sigma$ bit-strings $S_c$, one for each alphabet character $c$, defined as $S_c[i] = 1$ if and only if $S[i]=c$. It is easy to verify that, if $S$ has an attractor $\Gamma$, then $\Gamma$ is an attractor also for $S_c$ (repetitions are preserved). Then, we build our structure of Theorem \ref{th:PS} on each $S_c$ using $\Gamma$ as attractor and compute $S.rank_c(i) = S_c.rank_1(i)$ (we can associate each $c$ to its structure on $S_c$ in constant time by using perfect hashing). We obtain: 

\begin{theorem}\label{th:rank}
	Let $\gamma$ be the size of a string attractor for a string $S$ of length $n$ over an alphabet of size $\sigma$. Then, for all $2\leq \tau \leq n/\gamma$, we can store a data structure of size $O(\tau \sigma \gamma\log_\tau(n/\gamma))$ supporting \emph{rank} queries on $S$ in $O(\log_\tau(n/\gamma))$ time. 
\end{theorem}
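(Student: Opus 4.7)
The plan is to reduce general-alphabet \emph{rank} to the binary partial-sum problem already solved by Theorem \ref{th:PS}. For every symbol $c\in\Sigma$, I would define the indicator bit-string $S_c[1..n]$ by setting $S_c[i]=1$ iff $S[i]=c$. Then $S.rank_c(i)=S_c.rank_1(i)=S_c.psum(i)$, so it suffices to support partial-sum queries on each $S_c$ inside the space bound.

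The only genuinely non-bookkeeping step is to verify that the same attractor $\Gamma$ of $S$ is also an attractor for each $S_c$. I would argue this directly from Definition \ref{def: string attractor}: since $S\mapsto S_c$ acts position-wise, any equality $S[i'..j']=S[i..j]$ forces $S_c[i'..j']=S_c[i..j]$. Therefore, given any substring $S_c[i..j]$, the occurrence of $S[i..j]$ crossing $\Gamma$ (guaranteed by $\Gamma$ being an attractor for $S$) is also an occurrence of $S_c[i..j]$ crossing $\Gamma$. Hence $\Gamma$ attracts $S_c$. This is the one conceptual step and is the closest thing to an obstacle, though it is essentially immediate.

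With this in hand, I would instantiate Theorem \ref{th:PS} on each of the $\sigma$ bit-strings $S_c$ using the common attractor $\Gamma$ and the same parameter $\tau$. Each structure occupies $O(\tau\gamma\log_\tau(n/\gamma))$ words and answers a partial-sum (equivalently $rank_1$) query in $O(\log_\tau(n/\gamma))$ time; summing over the $\sigma$ structures yields the claimed $O(\tau\sigma\gamma\log_\tau(n/\gamma))$ space. To dispatch a query $S.rank_c(i)$ to the right structure in constant time, I would store a static perfect hash table mapping each symbol $c$ effectively occurring in $S$ to a pointer to its structure; this adds only $O(\sigma)$ extra words. The total query time is then $O(\log_\tau(n/\gamma))$, completing the proof.
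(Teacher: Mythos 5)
Your proposal matches the paper's proof exactly: both reduce $S.rank_c$ to binary partial-sum on the indicator strings $S_c$, observe (as you carefully justify) that the attractor $\Gamma$ of $S$ remains an attractor for each $S_c$, instantiate Theorem~\ref{th:PS} on each $S_c$, and use perfect hashing to dispatch queries. The space and time accounting is the same.
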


For $\tau=\log^{\epsilon}n$ and any constant $\epsilon>0$, our structure takes 
$O\left (\sigma\gamma \log^\epsilon n\log(n/\gamma) /\log\log n\right)$
words of space and answers queries in $O(\log(n/\gamma)/\log\log n)$ time. This running time matches the lower bound stated in Theorem \ref{cor:low_bound_compressors} when $\sigma \in O(\rm {polylog}\ n)$.

\subsection{Select}\label{sec:select}

We first consider the binary case and $select_1$ queries.
We use a straightforward reduction from $select_1$ to \emph{partial sum} queries that blows up the attractor size only by a constant factor.

We assume for simplicity that our input bit-vector $S$ ends with bit 1. If this is not the case, removing the trailing zeros from $S$ does not change the answer of any $select_1$ query and does not increase the attractor's size. 
Let $S = 0^{x_1-1}1\dots 0^{x_m-1}1$, with $x_i\geq 1$ for $i=1, \dots, m$, and define $S' = x_1\dots x_m$.

\begin{lemma}\label{lem:attr select}
	If $S$ has an attractor $\Gamma$ of size $\gamma$, then $S'$ has an attractor of size at most $2\gamma+1$.
\end{lemma}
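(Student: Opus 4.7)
The plan is to build an attractor $\Gamma'$ of $S'$ out of $\Gamma$ by mapping each position of $S$ to the index of the $1$-terminated block that contains it. Let $p_0 = 0$ and $p_i = x_1 + \cdots + x_i$, so that the $i$-th block $B_i = S[p_{i-1}+1..p_i] = 0^{x_i-1}1$ occupies positions $\{p_{i-1}+1,\dots,p_i\}$ in $S$; write $f(j)$ for the unique index with $p_{f(j)-1} < j \leq p_{f(j)}$. The candidate attractor is
\[
\Gamma' \;=\; \{1\} \;\cup\; \{f(j) : j \in \Gamma\} \;\cup\; \{f(j)+1 : j \in \Gamma,\ f(j) < m\},
\]
which has size at most $2\gamma + 1$ by inspection.

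To verify that $\Gamma'$ is a string attractor of $S'$ I would take an arbitrary substring $S'[a..b]$ and split on whether $a=1$ or $a\geq 2$. The case $a=1$ is trivial: the occurrence of $S'[1..b]$ at position $1$ already crosses $1\in\Gamma'$. For $a\geq 2$ the core idea is to lift the query from $S'$ to $S$ by considering the substring $T = S[p_{a-1}..p_b]$, whose first symbol is the $1$ that ends block $B_{a-1}$ and whose remaining characters spell out $B_a B_{a+1}\cdots B_b$. Applying the attractor property of $\Gamma$ to $T$ yields an occurrence $S[c..d] = T$ and some $j\in\Gamma$ with $c\leq j\leq d$. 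Because $T[1]=1$ and the only $1$s of $S$ are at positions $p_1,\dots,p_m$, we must have $c = p_{a''-1}$ for some $a''\geq 2$; the alignment then forces $B_{a''+k} = B_{a+k}$ for $0\leq k\leq b-a$, so $S'[a''..a''+b-a] = S'[a..b]$ is the desired matching occurrence in $S'$.

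The remaining task is to certify that this $S'$-occurrence crosses a position of $\Gamma'$. A short case analysis on where $j$ lies inside $[c..d]$ shows that $f(j) \in \{a''-1,\,a'',\dots,a''+b-a\}$. Whenever $f(j)\geq a''$ the occurrence already contains $f(j)\in\Gamma'$, and when $f(j) = a''-1$ (which happens precisely when $j=c=p_{a''-1}$ hits the leading $1$ of $T$) the point $f(j)+1 = a''$ lies inside the occurrence and in $\Gamma'$, using $a''\leq m$ from the validity bound $a''+b-a\leq m$. I expect this boundary case to be the main obstacle: it is the sole reason the bound is $2\gamma+1$ rather than $\gamma+1$, since an attractor position that happens to coincide with the final $1$ of a block ``witnesses'' a block of $S'$ lying one index to its right. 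The extra $+1$ in $\Gamma'$ takes care of the $a=1$ prefix, closing the count.
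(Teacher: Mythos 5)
Your proof is correct and constructs exactly the same attractor $\Gamma'=\{1\}\cup\{f(j),f(j)+1:j\in\Gamma\}$ as the paper's, and the overall structure---lift to $S$, apply the attractor property of $\Gamma$, then case-split on the block containing the crossing position $j$---also matches. The one genuine simplification in your version is the choice of lifted substring: you take $T=S[p_{a-1}..p_b]$, whose leading symbol is a $1$, which forces any occurrence $S[c..d]=T$ to begin at a block-terminating position and hence gives exact block-to-block alignment; the paper instead lifts $S'[a-1..b]$ to the full-block string $B_{a-1}\cdots B_b=S[p_{a-2}+1..p_b]$, whose occurrences may begin mid-way through a longer run of zeros, so it has to separately argue that the first (possibly partial) block can be discarded while blocks two onward align exactly. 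Your choice avoids that complication at no cost, and your identification of the boundary case $f(j)=a''-1$ as the reason for the $+1$ per attractor position is precisely the point of the paper's two-per-position construction.
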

\begin{proof}
	We build an attractor $\Gamma'$ for $S'$ as follows. 
	Note that, to build $S'$, we partition $S$ in blocks: each block is formed by a sequence of zeros terminated by a bit set. Given a position $i\in[1,|S|]$, we say that $i'$ is the \emph{corresponding position} of $i$ in $S'$ iff $i$ belongs to the $i'$-th block. 
	Starting with $\Gamma'=\{1\}$, for every $i\in \Gamma$ we insert in $\Gamma'$ the positions  $i'$ and $i'+1$ (unless they fall outside the range $[1,|S'|]$), where $i'$ is the corresponding position of $i$ in $S'$. More formally, 
	if $S[i]=0$ then we insert $S.rank_1(i)+1$ and $S.rank_1(i)+2$ in $\Gamma'$. Otherwise, if $S[i]=1$ then we insert $S.rank_1(i)$ and $S.rank_1(i)+1$ in $\Gamma'$. Clearly, $|\Gamma'| \leq 2|\Gamma|+1 = 2\gamma+1$.
	
	Consider any substring $S'[i..j]$. We prove that $S'[i..j]$ has an occurrence $S'[i'..j']$ (possibly, $i'=i$ and $j'=j$) such that $S'[i'..j']$ crosses an element of $\Gamma'$. This will prove our claim.
	
	If $i=1$, then $S'[i..j]$ crosses position $1\in\Gamma'$  and we are done.
	Otherwise, we focus on sequence $S'[i-1..j] = y_1 y_2 \dots y_{j-i+2}$.
	Note that this is the sequence of exponents of zeros in a subsequence of $S$ entirely covered by blocks: $S[S.select_1(i-2)+1..S.select_1(j)] = 0^{y_1}1\dots 0^{y_{j-i+2}}1$, where we take $S.select_1(0)=0$.
	By definition of $\Gamma$, this substring of $S$ has an occurrence $S[i'',j'']$ crossing an element of $\Gamma$. However, note that this occurrence is not necessarily preceded by a bit set. As a consequence, we can only state that the corresponding subsequence of $S'$, i.e. $S'[t..S.rank_1(j'')]$ with $t=S.rank_1(i'')+1$ if $S[i'']=0$ and $t=S.rank_1(i'')$ otherwise, is such that $S'[t..S.rank_1(j'')] = w  y_2 \dots y_{j-i+2}$, with $w\geq y_1$. 
	Since $S[i'',j'']$ crosses an element of $\Gamma$ then, by the way we defined $\Gamma'$, either $S'[S.rank_1(j'')]$ or two adjacent characters $S'[k,k+1]$, with $t \leq k < S.rank_1(j'')$, cross an element of $\Gamma'$. Then, this means that $S'[t+1..S.rank_1(j'')] = y_2 \dots y_{j-i+2} = S'[i..j]$ crosses an element of $\Gamma'$. This concludes our proof.
\end{proof}

At this point, the solution for select is immediate: we build the structure of Theorem \ref{th:PS} on the sequence $S' = x_1\dots x_m$ using the attractor of Lemma \ref{lem:attr select}, and simply note that $S.select_1(i) = S'.psum(i)$.

Given a string $S$ over a generic alphabet of size $\sigma$, we can solve $S.select_c()$ as follows. We build $\sigma$ bit-strings $S_c$, one for each alphabet character $c$, defined as $S_c[i] = 1$ if and only if $S[i]=c$. As seen in the previous section, an attractor for $S$ is also an attractor for $S_c$. We build our structure solving $select_1$ on each $S_c$ and compute $S.select_c(i) = S_c.select_1(i)$. We obtain: 

\begin{theorem}\label{th:select}
	Let $\gamma$ be the size of a string attractor for a string $S$ of length $n$ over an alphabet of size $\sigma$. Then, for all $2\leq \tau \leq n/\gamma$, we can store a data structure of size $O(\tau \sigma \gamma\log_\tau(n/\gamma))$ supporting \emph{select} queries on $S$ in $O(\log_\tau(n/\gamma))$ time.
\end{theorem}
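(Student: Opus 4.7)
The plan is to lift the binary $select_1$ construction to arbitrary alphabets exactly as was done in the proof of Theorem \ref{th:rank}, by splitting $S$ into per-character indicator bit-strings and building an independent structure on each.

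First, I would introduce, for every $c\in\Sigma$ that occurs in $S$, the bit-string $S_c\in\{0,1\}^n$ defined by $S_c[i]=1$ iff $S[i]=c$. As observed in the rank section, any attractor $\Gamma$ of $S$ is also an attractor of $S_c$, because the occurrences of substrings of $S_c$ are in one-to-one correspondence with occurrences of substrings of $S$ having the same character pattern; in particular $|\Gamma|\le\gamma$ is inherited. Then $S.select_c(i)=S_c.select_1(i)$, so it suffices to equip each $S_c$ with the binary $select_1$ machinery already developed.

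Next, I would instantiate the binary case on $S_c$: after trimming a possible trailing run of zeros (which affects neither $select_1$ answers nor the attractor size), write $S_c=0^{x_1-1}1\dots 0^{x_{m_c}-1}1$ and let $S'_c=x_1\dots x_{m_c}$. By Lemma~\ref{lem:attr select}, $S'_c$ admits an attractor of size at most $2\gamma+1$. Applying Theorem~\ref{th:PS} to $S'_c$ with this attractor and the chosen parameter $\tau$ gives a partial-sum structure of size $O(\tau\gamma\log_\tau(n/\gamma))$ answering queries in $O(\log_\tau(n/\gamma))$ time. Since $S_c.select_1(i)=S'_c.psum(i)$, this yields a $select_1$ structure on $S_c$ within the same bounds.

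Finally, I would aggregate. Store the collection $\{D_c\}_{c\in\Sigma}$ of these $\sigma$ structures and, to locate $D_c$ in $O(1)$ time from a query character $c$, use a perfect hash indexed by the alphabet symbols actually appearing in $S$. The total space is $\sigma$ times $O(\tau\gamma\log_\tau(n/\gamma))$, i.e.\ $O(\tau\sigma\gamma\log_\tau(n/\gamma))$, and each $select_c$ query reduces in $O(1)$ time to a single $S'_c.psum$ query costing $O(\log_\tau(n/\gamma))$, giving the claimed bounds.

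There is no real obstacle here, since the heavy lifting is done by Lemma~\ref{lem:attr select} and Theorem~\ref{th:PS}; the only point that needs a word of justification is the invariance of the attractor under the per-character projection $S\mapsto S_c$, which is immediate from Definition~\ref{def: string attractor}. Everything else is an additive decomposition over the alphabet together with the standard perfect-hashing trick to dispatch on $c$.
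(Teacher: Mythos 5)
Your proposal is correct and follows the paper's own proof essentially verbatim: reduce $select_c$ to $select_1$ on the per-character indicator string $S_c$ (whose attractor is inherited from $S$), then reduce $select_1$ to \emph{partial sum} on the gap sequence $S'_c$ via Lemma~\ref{lem:attr select} and Theorem~\ref{th:PS}, and dispatch on $c$ with perfect hashing. No meaningful deviation from the paper's argument.
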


For $\tau=\log^{\epsilon}n$ and any constant $\epsilon>0$, our structure takes 
$O\left (\sigma\gamma \log^\epsilon n\log(n/\gamma) /\log\log n\right)$
words of space and answers queries in $O(\log(n/\gamma)/\log\log n)$ time. This running time matches the lower bound stated in Theorem \ref{cor:low_bound_compressors} when $\sigma \in O(\rm {polylog}\ n)$.

\subsection{Predecessor and Tree Navigation}

Let $U\subseteq [1,n]$ be a set of size $m$, and let $S$ be a binary string of length $n$ such that $S[i]=1$ iff $i\in U$ with attractor of size $\gamma$. Using the solutions for \emph{rank} and \emph{select} seen in the previous sections, we can easily support \emph{predecessor} on $U$ in $O(\gamma\ \rm{polylog}\ n)$ space and $O(\log (n/\gamma)/\log\log n)$ time (optimal by Corollary \ref{cor:low_bound_compressors}). In an extended version of this paper we will show that we can improve this upper bound (both in space and time) on sparse sets, achieving $O(\gamma \log^{1+\epsilon} m /\log\log m) = O(\gamma\ \rm{polylog}\ m)$ space and $O(\log m/\log\log m)$ query time. This solution is analogous to that used in Theorem \ref{th:PS} but, in addition, uses fusion trees to accelerate local predecessor queries.

To conclude, one can obtain fast navigational queries on attractor-compressed trees by combining the SLP-based implementation of balanced-parentheses operations (1-7) (see Section \ref{sec:lower tree}) described by Bille et al.~\cite[Lem. 8.2, 8.3]{bille2015random} with the SLP of~\cite[Thm. 3.14]{kempa2018roots}, built on the balanced-parentheses representation of the tree. This immediately  yields $O(\log n)$-time navigation within $O(\gamma\log^2(n/\gamma))$ words of space. To reduce this running time to the optimal $O(\log n/\log\log n)$, we note that one could increase the arity of the SLP to $\log^\epsilon n$ as described in \cite[Thm. 2]{belazzougui2015access}. Space could be further reduced by employing the RLSLP---of size $O(\gamma\log(n/\gamma))$---described in \cite{navarro2018faster} and adapting the algorithms to work on run-length SLPs. We will cover these improvements in an extended version of this paper.

\bibliography{paper}

\begin{thebibliography}{10}

\bibitem{beame2002optimal}
Paul Beame and Faith~E Fich.
\newblock Optimal bounds for the predecessor problem and related problems.
\newblock {\em Journal of Computer and System Sciences}, 65(1):38--72, 2002.

\bibitem{BN14}
D.~Belazzougui and G.~Navarro.
\newblock Optimal lower and upper bounds for representing sequences.
\newblock {\em ACM Transactions on Algorithms}, 11(4):article 31, 2015.

\bibitem{belazzougui2009monotone}
Djamal Belazzougui, Paolo Boldi, Rasmus Pagh, and Sebastiano Vigna.
\newblock Monotone minimal perfect hashing: searching a sorted table with o (1)
  accesses.
\newblock In {\em Proceedings of the twentieth annual ACM-SIAM symposium on
  Discrete algorithms}, pages 785--794. SIAM, 2009.

\bibitem{belazzougui2015access}
Djamal Belazzougui, Patrick~Hagge Cording, Simon~J Puglisi, and Yasuo Tabei.
\newblock Access, rank, and select in grammar-compressed strings.
\newblock In {\em Algorithms-ESA 2015}, pages 142--154. Springer, 2015.

\bibitem{belazzougui2015queries}
Djamal Belazzougui, Travis Gagie, Pawel Gawrychowski, Juha K{\"a}rkk{\"a}inen,
  Alberto Ord{\'o}nez, Simon~J Puglisi, and Yasuo Tabei.
\newblock Queries on lz-bounded encodings.
\newblock In {\em Data Compression Conference (DCC), 2015}, pages 83--92. IEEE,
  2015.

\bibitem{benoit2005representing}
David Benoit, Erik~D Demaine, J~Ian Munro, Rajeev Raman, Venkatesh Raman, and
  S~Srinivasa Rao.
\newblock Representing trees of higher degree.
\newblock {\em Algorithmica}, 43(4):275--292, 2005.

\bibitem{bille2015random}
Philip Bille, Gad~M Landau, Rajeev Raman, Kunihiko Sadakane, Srinivasa~Rao
  Satti, and Oren Weimann.
\newblock Random access to grammar-compressed strings and trees.
\newblock {\em SIAM Journal on Computing}, 44(3):513--539, 2015.

\bibitem{blumer1987complete}
Anselm Blumer, Janet Blumer, David Haussler, Ross McConnell, and Andrzej
  Ehrenfeucht.
\newblock Complete inverted files for efficient text retrieval and analysis.
\newblock {\em Journal of the ACM}, 34(3):578--595, 1987.

\bibitem{burrows1994block}
Michael Burrows and David~J. Wheeler.
\newblock A block sorting lossless data compression algorithm.
\newblock Technical Report 124, Digital Equipment Corporation, 1994.

\bibitem{crochemore1997direct}
Maxime Crochemore and Renaud V{\'e}rin.
\newblock Direct construction of compact directed acyclic word graphs.
\newblock In {\em Combinatorial Pattern Matching (CPM)}, pages 116--129.
  Springer, 1997.

\bibitem{fredman1990blasting}
Michael~L Fredman and Dan~E Willard.
\newblock Blasting through the information theoretic barrier with fusion trees.
\newblock In {\em Proceedings of the twenty-second annual ACM symposium on
  Theory of Computing}, pages 1--7. ACM, 1990.

\bibitem{ganardi2018tree}
Moses Ganardi, Danny Hucke, Markus Lohrey, and Eric Noeth.
\newblock Tree compression using string grammars.
\newblock {\em Algorithmica}, 80(3):885--917, 2018.

\bibitem{jacobson1989space}
Guy Jacobson.
\newblock Space-efficient static trees and graphs.
\newblock In {\em Foundations of Computer Science, 1989., 30th Annual Symposium
  on}, pages 549--554. IEEE, 1989.

\bibitem{kempa2018string}
Dominik Kempa, Alberto Policriti, Nicola Prezza, and Eva Rotenberg.
\newblock {String Attractors: Verification and Optimization}.
\newblock In Yossi Azar, Hannah Bast, and Grzegorz Herman, editors, {\em 26th
  Annual European Symposium on Algorithms (ESA 2018)}, volume 112 of {\em
  Leibniz International Proceedings in Informatics (LIPIcs)}, pages
  52:1--52:13, Dagstuhl, Germany, 2018. Schloss Dagstuhl--Leibniz-Zentrum fuer
  Informatik.

\bibitem{kempa2018roots}
Dominik Kempa and Nicola Prezza.
\newblock {At the Roots of Dictionary Compression: String Attractors}.
\newblock In {\em Annual Symposium on Theory of Computing (STOC)}, pages
  827--840. ACM, 2018.

\bibitem{KidaMSTSA03}
T.~Kida, T.~Matsumoto, Y.~Shibata, M.~Takeda, A.~Shinohara, and S.~Arikawa.
\newblock Collage system: A unifying framework for compressed pattern matching.
\newblock {\em Theor. Comput. Sci.}, 298(1):253--272, 2003.

\bibitem{KY00}
John~C. Kieffer and En-Hui Yang.
\newblock Grammar-based codes: {A} new class of universal lossless source
  codes.
\newblock {\em IEEE Transactions on Information Theory}, 46(3):737--754, 2000.

\bibitem{lempel1976complexity}
A.~Lempel and J.~Ziv.
\newblock On the complexity of finite sequences.
\newblock {\em {IEEE} Trans. Information Theory}, 22(1):75--81, 1976.

\bibitem{navarro2016compact}
Gonzalo Navarro.
\newblock {\em Compact data structures: A practical approach}.
\newblock Cambridge University Press, 2016.

\bibitem{navarro2018faster}
Gonzalo Navarro and Nicola Prezza.
\newblock Faster attractor-based indexes.
\newblock {\em arXiv preprint arXiv:1811.12779}, 2018.

\bibitem{navarro2018universal}
Gonzalo Navarro and Nicola Prezza.
\newblock {Universal Compressed Text Indexing}.
\newblock {\em Theoretical Computer Science}, 2018.

\bibitem{ordonez2017grammar}
Alberto Ord{\'o}{\~n}ez, Gonzalo Navarro, and Nieves~R Brisaboa.
\newblock Grammar compressed sequences with rank/select support.
\newblock {\em Journal of Discrete Algorithms}, 43:54--71, 2017.

\bibitem{storer1982data}
James~A. Storer and Thomas~G. Szymanski.
\newblock Data compression via textual substitution.
\newblock {\em Journal of the ACM}, 29(4):928--951, 1982.

\bibitem{CVY13}
Elad Verbin and Wei Yu.
\newblock Data structure lower bounds on random access to grammar-compressed
  strings.
\newblock In {\em Annual Symposium on Combinatorial Pattern Matching}, pages
  247--258. Springer, 2013.

\end{thebibliography}

\appendix

\end{document}